\newtheorem{definition}{Definition}
\newtheorem{proposition}{Proposition}
\newcommand{\ba}{\begin{array}}
\newcommand{\ea}{\end{array}}
\newcommand{\bda}{\[\ba}
\newcommand{\eda}{\ea\]}
\newcommand{\hs}{\hspace{7.5pt}}                 %
\newcommand{\str}[1]{\texttt{"#1"}}              %
\newcommand{\attribute}[2]{(\str{#1}, \str{#2})} %
\newcommand{\equality}[2]{#1 \equiv #2}           %
\newcommand{\inequality}[2]{#1 \not\equiv #2}     %
\newcommand{\maintainedBy}[2]{#1 \approx #2}      %
\newcommand{\maintainedByI}[2]{#1 \approx_1 #2}   %
\newcommand{\maintainedByII}[2]{#1 \approx_2 #2}  %
\newcommand{\maintainedByIII}[2]{#1 \approx_3 #2} %
\newcommand{\filter}{\mathcal F}
\newcommand{\extract}{\mathcal E}
\newcommand{\projKV}[2]{#1 \Downarrow_{KV} #2}
\newcommand{\projK}[2]{#1 \Downarrow_{K} #2}
\newcommand{\rlabel}[1]{\mbox{(#1)}}   %
\DeclareMathOperator{\simjw}{sim_{JW}} %
\DeclareMathOperator{\eqv}{eq}         %
\DeclareMathOperator{\idente}{ident_{\approx, \extract}} %
\newcommand{\myirule}[2]{{\renewcommand{\arraystretch}{1.2}\ba{c} #1
                      \\ \hline #2 \ea}}
\DeclareSIUnit\inch{''}
\DeclareSIUnit\pixel{px}
\algnewcommand\Break{\State \textbf{break}}
\algnewcommand\Continue{\State \textbf{continue}}
\begin{document}

\title{Visual Testing of GUIs by Abstraction}

\author{Daniel Kraus}
\email{daniel.kraus@retest.de}
\affiliation{%
	\institution{ReTest GmbH}
	\streetaddress{Haid-und-Neu-Straße 7}
	\postcode{76131}
	\city{Karlsruhe}
	\country{Germany}}

\author{Jeremias Rößler}
\email{jeremias.roessler@retest.de}
\affiliation{%
	\institution{ReTest GmbH}
	\streetaddress{Haid-und-Neu-Straße 7}
	\postcode{76131}
	\city{Karlsruhe}
	\country{Germany}}

\author{Martin Sulzmann}
\email{martin.sulzmann@hs-karlsruhe.de}
\affiliation{%
	\institution{Karlsruhe University of Applied Sciences}
	\department{Faculty of Computer Science and Business Information Systems}
	\streetaddress{Moltkestraße 30}
	\postcode{76133}
	\city{Karlsruhe}
	\country{Germany}}

\begin{CCSXML}
<ccs2012>
<concept>
<concept_id>10011007.10011074.10011099.10011102.10011103</concept_id>
<concept_desc>Software and its engineering~Software testing and debugging</concept_desc>
<concept_significance>500</concept_significance>
</concept>
</ccs2012>
\end{CCSXML}

\ccsdesc[500]{Software and its engineering~Software testing and debugging}

\keywords{GUI testing, visual testing, test automation}

\begin{abstract}
Ensuring the correct visual appearance of graphical user interfaces~(GUIs) is important
because visual bugs can cause substantial losses for businesses.
An application might behave functionally correct in an automated test,
but visual bugs can make the GUI effectively unusable for the user.
Most of today's approaches for visual testing are pixel-based and tend to have flaws
that are characteristic for image differencing.
For instance, minor and unimportant visual changes often cause false positives,
which confuse the user with unnecessary error reports.
Our idea is to introduce an abstract GUI state~(AGS),
where we define structural relations to identify relevant GUI changes and
ignore those that are unimportant from the user's point of view.
In addition, we explore several strategies to address
the GUI element identification problem in terms of AGS.
This allows us to provide rich diagnostic information
that help the user to better interpret changes.
Based on the principles of golden master testing,
we can support a fully-automated approach to visual testing by using the AGS.
We have implemented our approach to visually test web pages and
our experiments show that we are able to reliably detect GUI changes.
\end{abstract}

\maketitle

\section{Introduction}
\label{sec:intro}

\emph{Graphical user interfaces~(GUIs)} are ubiquitous.
Websites, for example, act as GUIs for the services behind.
Millions of users access the Internet through their browser every day,
interacting with a variety of different websites such as Amazon, Google, YouTube and many more.
As pointed out by Alameer et al.~\cite{alameer16},
users often base their impressions of trustworthiness
as well as quality---and ultimately the decision
to purchase a product---on the visual appearance~\cite{egger00, everard05, fogg01}.
Therefore, it is crucial to ensure the \emph{visual correctness} of such web-based GUIs.
The problem is that the system under test~(SUT) might behave functionally correct in an automated test,
but unnoticed visual bugs can make the GUI unusable from a user's perspective~\cite{li10}.
Visual bugs are not negligible as they can cause substantial losses for businesses~\cite{battat19}.
Consider, e.g., a paid ad not being displayed properly.
In the case of a pay-per-click platform such as Facebook,
both the advertising provider and the advertiser are affected.

Web-based GUIs are particularly challenging in the context of visual testing because they are being accessed in many different ways.
Users may visit a website from a desktop computer, a notebook, a tablet, or a smartphone.
In addition, different browsers---or different versions of the same browser---may render a web page differently.
This results in a multidimensional matrix that includes a wide range of devices, operating systems, browsers, screen sizes and resolutions.
Each entry in this matrix represents a possible usage scenario, potentially containing visual bugs, that the GUI developer must consider~\cite{althomali19}.

In the area of manual testing, some tools~\cite{luo18, quirktools19, testsize19} allow to visually inspect the GUI using a range of common screen sizes and resolutions.
While this is helpful for sanity checks or exploratory testing, manually checking for visual correctness is time consuming, inconsistent and prone to human errors~\cite{althomali19}.

In terms of automated testing, various approaches~\cite{galenframework19, halle16, panchekha18, zaiats19} are based on formal specifications,
which define desired GUI properties.
But writing and maintaining specifications can be tedious and requires manual effort.
Furthermore, a specification can only protect against expected changes~\cite{slatkin13};
namely those changes that are covered by the specification.
All other GUI changes are implicitly allowed as they do not lead to test failures.

Because of this, many visual testing approaches rely on \emph{golden master testing} (or characterization testing),
a means to characterize the behavior of the test object to protect it against unintended changes~\cite{feathers04}.
In order to do so, the results of a previous, typically stable version (the golden master) serve as the test oracle.
That is, a golden master test passes if the corresponding behavior of the test object remains unchanged---regardless of its correctness.
Most implementations here,
both academic~\cite{mahajan16, mahajan15, roychoudhary13, saar16} and industrial~\cite{applitools19a, slatkin16, percy19},
use image comparison.

\begin{figure*}
	\centering
	\includegraphics{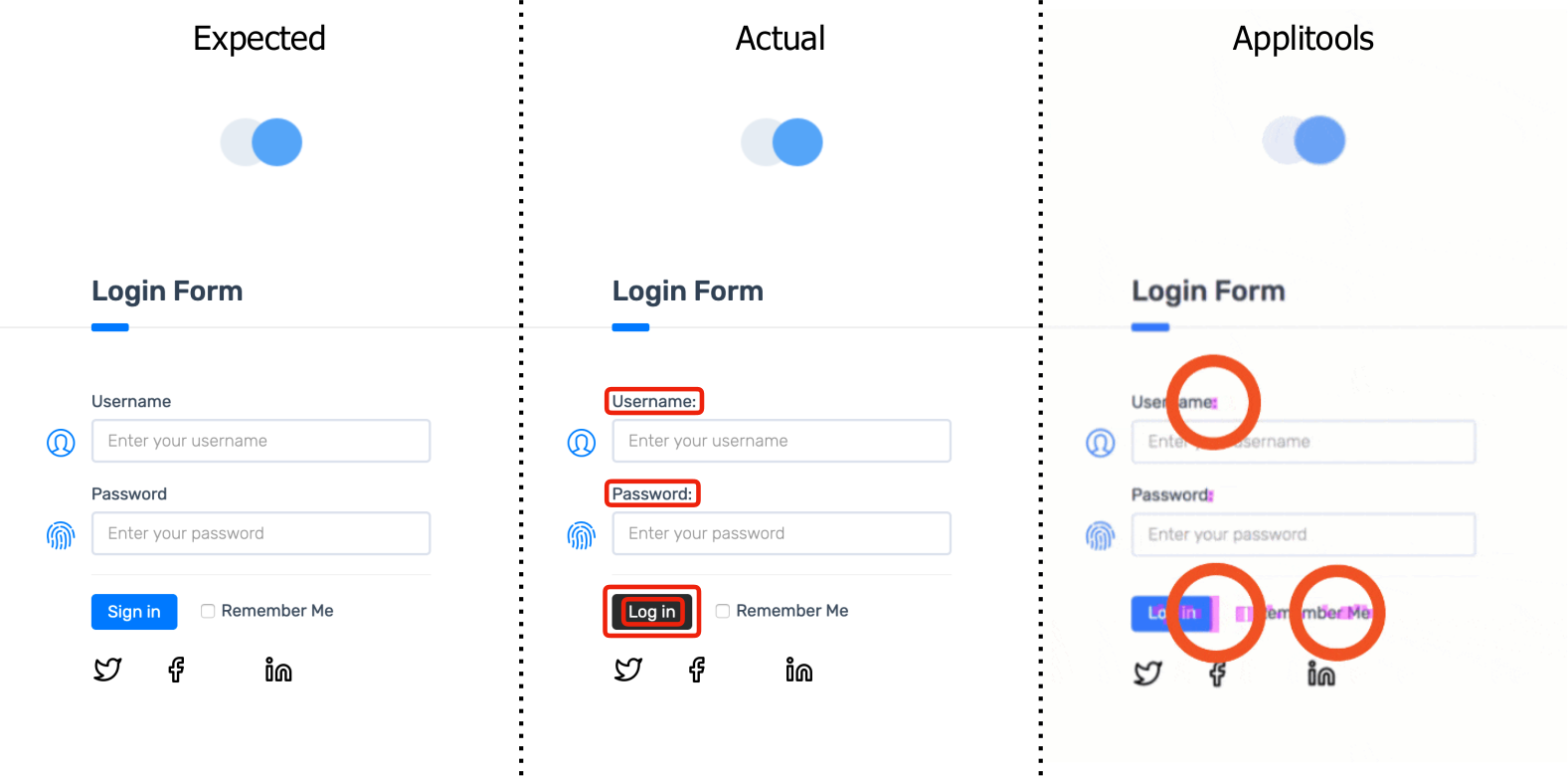}
	\subfloat[\label{fig:login-expected} Expected login screen.]{\hspace{.23\linewidth}}
	\subfloat[\label{fig:login-actual} Actual differences.]{\hspace{.23\linewidth}}
	\subfloat[\label{fig:login-applitools} Applitools' differences.]{\hspace{.23\linewidth}}
	\caption{Expected compared to actual login screen and differences highlighted by Applitools~\cite{applitools19b}.}
	\label{fig:login-all}
\end{figure*}

The problem with pixel-based tools is that they often produce false positives when minor, unimportant visual differences occur~\cite{alegroth18}.
For instance, a new browser version may cause screenshots to be off by only a few pixels, which often results in test failures.
Moreover, the majority of these tools have problems grouping together the same or similar changes.
If, e.g., the header of a website changes and this header is being used across multiple tests,
all of these tests usually have to be maintained one by one, although it is the very same underlying change.
Also, the vast majority of pixel-based tools only report that something in a certain area has changed,
not \emph{how} it has changed (e.g. text or layout change).
This means additional effort for the users to interpret the given GUI changes.

Instead of using a pixel-based representation, our idea is to find a suitable abstraction for a GUI.
Inspired by existing approaches that incorporate some kind of tree-based structure for GUI testing~\cite{moran18, grechanik18, walsh17, xie09},
we present an \emph{abstract GUI state~(AGS)} that is:
\begin{enumerate}
	\item Sufficiently expressive to capture the essence of a GUI including visible and non-visible properties.
	\item Computationally tractable to compare two (expected vs. actual) GUI states and to identify any changes.
	\item Customizable to let the user decide what changes are important and which are not.
	\item Platform-independent to achieve a high degree of reusability.
\end{enumerate}
On top of the AGS, we support a highly-automated method for visual testing by abstraction.
We implement the AGS-centric components and leverage the principles of golden master testing
to create an extensible test framework called \emph{recheck}~\cite{retest19a}.
To demonstrate recheck's capabilities, we provide \emph{recheck-web}~\cite{retest19b};
an adapter for web-based GUIs to the AGS.

For the evaluation of our approach, we created a visual testing benchmark~\cite{kraus19}
that contains 20 of the most popular websites,
which we use to simulate typical GUI changes.
In addition, we compare our implementation against Applitools~\cite{applitools19a},
an industrial tool for visual testing.
As part of our benchmark repository,
we also document the entire setup in detail and
publish all relevant data that was collected during our experiments.
recheck, recheck-web as well as the benchmark are fully open source.

In summary, our contributions are:
\begin{description}
	\item[C$_1$] We introduce a platform-independent representation of GUI states that allows us to define structural relations
			to detect GUI changes.
	\item[C$_2$] We specify the problem of GUI element identification and explore several strategies for it
			to provide the user with rich diagnostic information.
	\item[C$_3$] We implement the approach as a platform-independent test framework and provide an adapter for web-based GUIs,
			both open source and freely available.
	\item[C$_4$] We create an open-source benchmark for visual testing of web-based GUIs,
			which we use to evaluate our approach.
\end{description}
C$_1$ and C$_2$ are covered in Section~\ref{sec:ags},
C$_3$ and C$_4$ in Section~\ref{sec:eval}.
We discuss related work in Section~\ref{sec:related-work} and conclude our results in Section~\ref{sec:conclusion-future-work}.
The upcoming section starts with a general overview of our approach.

\section{Overview}
\label{sec:overview}

We follow earlier works and employ a tree-structured representation for GUIs,
to which we refer as \emph{abstract GUI state~(AGS)}.
Based on the principles of golden master testing,
we compare the \emph{actual} GUI state against an \emph{expected} GUI state.
Both states are represented as AGSes.
The comparison test is carried out by identifying deleted, created and maintained elements,
a.k.a. the \emph{GUI element identification problem}~\cite{mcmaster09}.
Maintained elements are not necessarily identical,
but functionally equivalent.
For this purpose, we explore several strategies on top of the AGS.
Using these abstractions,
we can identify how and why GUIs differ.

Compared to other approaches,
we are able to provide detailed diagnostic information to the user.
This helps to better interpret GUI changes and assists the user when reviewing them.
In the following, we give an overview of our approach.
We do this by first illustrating the current state of visual testing using Applitools,
a well-known industrial tool, as an example.
Then we outline several improvements of our approach when compared to pixel-based methods.

Applitools offers various APIs for major programming languages and test frameworks to create tests with visual checks.
During the execution of a test, every check creates a screenshot that is uploaded to a dedicated service,
where it is compared against a given baseline (i.e. expected vs. actual).
Computer vision~(CV) algorithms attempt to only report perceptible differences,
an advantage over conventional pixel-by-pixel approaches.
A test manager can be used to review the test results in detail.
Changes can be either approved, rejected, or ignored.

Figure~\ref{fig:login-all} shows three screenshots of a demo login screen used in one of Applitools' tutorials~\cite{applitools19b}.
The left screenshot \ref{fig:login-expected} represents the initial version of the login (expected).
The middle screenshot \ref{fig:login-actual} is an adapted version (actual).
(We ignore the right part for now.)
Suppose expected is the login screen used in production, whereas actual is a modified version that should be released.
But before the actual deployment, we want to make sure that only those parts of the GUI have changed that are intended to change.
As highlighted in red in the middle, there are four \emph{relevant} differences:
\begin{enumerate}
	\item The label \enquote{Username} is now followed by a colon (\enquote{:}).
	\item Also the label \enquote{Password} is now followed by a colon.
	\item The text of the \enquote{Sign in} button now is \enquote{Log in}.
	\item The color of this button changed from blue to black.
\end{enumerate}
Let's say all of these changes were intended.
The problem with traditional, pixel-based approaches like Applitools is that they often report false positives.
For instance, the \enquote{Remember Me} checkbox slightly moved to the left, caused by the shorter \enquote{Log in} button.
One could argue that this is something worth reporting, but most of the time this is just noise for the user.
Both the \enquote{Log in} button and the \enquote{Remember Me} checkbox are still aligned, the distance between remains the same, and there is no visual bug such as an overlap.
Consequently, we can assume this is an irrelevant difference.
Nevertheless, the change is reported by Applitools
as can be seen on the right in Figure~\ref{fig:login-applitools},
which is an actual screenshot from the tutorial~\cite{applitools19b}.

Also recent advances suffer from similar flaws.
For instance, the authors of GCat~\cite{moran18} are pointing out that it is difficult to properly tune the sensitivity of the image comparison.
If the threshold for reporting differences is too low, it might cause false positive.
If it is too high, false negatives may occur.

\begin{figure*}
	\centering
	\includegraphics[scale=0.2]{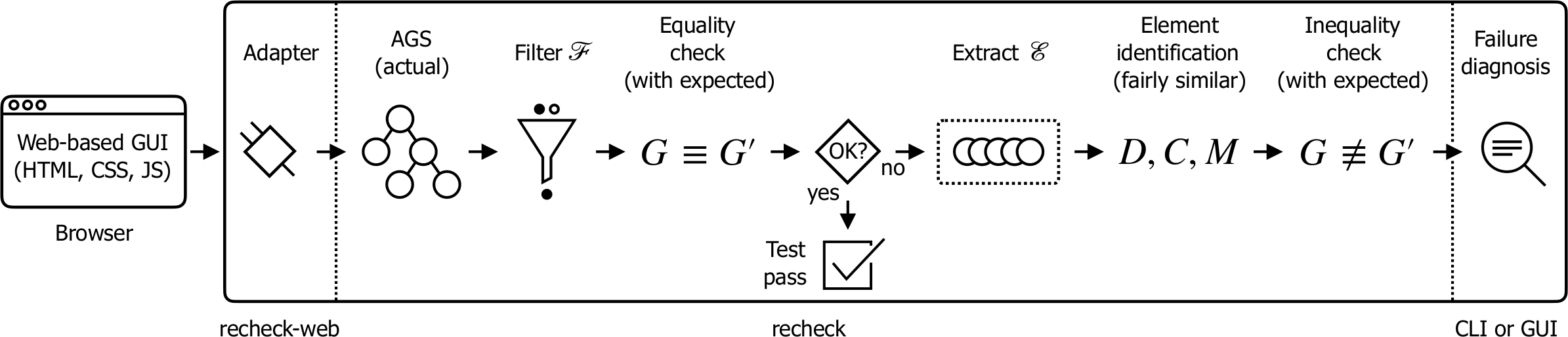}
	\caption{The AGS framework and its tools.}
	\label{fig:ags-framework}
\end{figure*}

Our idea is to introduce a rich but platform-independent representation for GUI states---the AGS.
It is structure-preserving, captures both visible and non-visible attributes and is obtained via a platform-specific adapter.
As an example, let's look at the HTML code of the former \enquote{Sign in} button:
\begin{verbatim}
<a id="login" class="btn btn-primary"
   href="/app.html">Sign in</a>
\end{verbatim}
The button is represented by an \texttt{<a>} element with attributes \texttt{id}, \texttt{class} and a \texttt{href} as well as the text \str{Sign in}.
The browser analyzes the given HTML and constructs the document object model~(DOM)---the in-memory representation of a web page.
Also CSS attributes and JavaScript code possibly influence the DOM and,
therefore, the button's appearance.
For simplification, we ignore most of it for our example.
Using our AGS syntax, this would result in the following data structure:
\bda{lll}
E & = & (\{ \attribute{id}{login}, \\
  &   & \hs \attribute{background-color}{\#047bf8}, \\
  &   & \hs \attribute{href}{/app.html}, \\
  &   & \hs \attribute{text}{Sign in}, \\
  &   & \hs \attribute{type}{a} \}, [])
\eda
The HTML \texttt{<a>} element has been mapped to an AGS element tuple $E = (As, Es)$,
consisting of a set of attributes $As = \{A_1, \dots, A_n\}$ and an empty list of child elements $Es = []$.
An attribute $A$ simply refers to a key-value pair $(K, V)$.
Again, we broke down this mapping to its essential parts,
leaving out other attributes derived from the underlying data like
the text color or the element position and size.
With these simple building blocks, we can capture the essence of a GUI.
In case of the new \enquote{Log in} button, differences are reflected in the element's attributes:
\bda{lll}
E' & = & (\{ \attribute{id}{login}, \\
  &   & \hs \attribute{background-color}{\#292b2c}, \\
  &   & \hs \attribute{onclick}{login()}, \\
  &   & \hs \attribute{text}{Log in}, \\
  &   & \hs \attribute{type}{button} \}, [])
\eda
Based on the visual appearance, one would only expect \texttt{text} and \texttt{background-color} to change.
But the AGS reveals that also the type has changed from \texttt{<a>} to \texttt{<button>},
which affects the implementation of the login:
instead of an \texttt{href} target, we now have an \texttt{onclick} event listener attached.

Information like this can be helpful for the user, but it depends on the context.
In fact, as we also track changes of an element's position and size,
one would first assume that we also report false positives like the shift of the \enquote{Remember Me} checkbox mentioned above.
But once the AGS has been constructed and before we attempt to compare expected and actual,
the user can filter specific elements and attributes to permanently ignore small changes like this.

We do this by defining a \emph{filter} function $\filter$ that,
when applied to a GUI state $G$, yields a possibly filtered GUI state $G'$.
The elements in $G'$ are not necessarily all of $G$ and may not contain all of their attributes.
While the mapping to the AGS is platform-specific, this step is domain-specific.
For example, we can filter out position changes by defining a certain threshold.
We can also filter when the GUI contains dynamic content such as animations.
We can even filter out the entire text of a GUI to test the effects of internationalization to the visual appearance.
Image processing is typically too fragile for this~\cite{alameer16},
and other approaches that rely on DOM analysis often raise false positives on different DOMs,
although they may lead to identical or sufficiently similar renderings~\cite{saar16}.
Thus, the user can utilize $\filter$ to precisely specify when changes are to be reported,
and to implement cross-browser and \=/platform visual testing.

The next step constitutes the actual comparison of the expected and actual GUI state.
We use an \emph{equality} relation $\equality{}{}$ to check if two AGSes are the same,
formulated as a declarative proof system.
We refer to the expected login screen as $G$ and to the actual login screen as $G'$.
We then attempt to verify $\equality{G}{G'}$ by recursively
comparing each element from $G$ to its corresponding element in $G'$.
In case of the login button,
this leads to the situation where $K = K' = \str{text}$,
but $V = \str{Sign in}$ and $V' = \str{Log in}$.

If equality fails like here,
we first attempt to identify GUI elements between expected an actual, i.e.,
to address the aforementioned GUI element identification problem.
For this, we employ a less strict \emph{fairly similar} relation $\maintainedBy{}{}$,
which identifies deleted and created elements as well as similar (maintained) pairs via customizable strategies.
For instance, the former \enquote{Sign in} button and the new \enquote{Log in} button
are considered a fairly similar element pair,
since the button has been maintained but is functionally equivalent.

Note that we not necessarily include all elements;
depending on the application context,
we might only want to consider leaf elements for example.
For this purpose, we use an \emph{extractor function} $\extract$
that selects the elements to be analyzed.

In a final step,
we provide diagnostic information by introducing an
\emph{inequality} relation $\inequality{}{}$ that explicitly captures the failure reason.
We apply inequality on maintained pairs and
report all found differences afterwards (including deleted and created elements).
In our example, this yields differences for the keys
\texttt{text}, \texttt{type}, \texttt{background-color}, \texttt{href} and \texttt{onclick}.

To summarize, the AGS essentially represents a mapping from a specific GUI technology to a generic GUI abstraction.
Based on the AGS, we have built a declarative formalism to specify equality, inequality and fairly similar in a platform-independent manner.
Given this foundation, we are able to provide a powerful framework for automated visual testing using pure abstraction.
Figure~\ref{fig:ags-framework} illustrates the essential components of our approach and
the individual steps of our golden master-based testing method on top of the AGS:
\begin{enumerate}
	\item Construct the actual AGS via a platform-specific adapter.
	\item Filter the constructed AGS based on user-defined rules.
	\item Check for equality among expected and actual.
	\item If equality succeeds, the GUI is considered OK.
	\item If equality fails, extract the elements to be analyzed.
	\item Identify the elements to see \emph{how} the GUI changed.
	\item Check for inequality to see \emph{what} in the GUI changed.
\end{enumerate}

Compared to existing approaches---be it pixel- or tree-based---the AGS is more flexible, highly customizable and provides better diagnostic information.
Also, to the best of our knowledge, there is no declarative proof system to formalize the act and results of two GUI states being compared.

Next, we explain the details behind the AGS before we conduct an empirical evaluation of our implementation for web-based GUIs.

\section{Abstract GUI State}
\label{sec:ags}

We start by introducing the syntax of an abstract GUI state~(AGS),
before we introduce the details of the equality ($\equality{}{}$), inequality ($\inequality{}{}$) and fairly similar ($\maintainedBy{}{}$) relations.

\subsection{Syntax}
\label{subsec:syntax}

The AGS is a tree-like data structure to provide
for a structure-preserving representation of a GUI.
Its definition is as follows:

\begin{definition}[Abstract GUI State, AGS]
\label{def:ags}

  \bda{rcll}
  G  & ::= & Es                        & \mbox{GUI state}\\
  Es & ::= & [] \mid E : Es            & \mbox{Elements}\\
  E  & ::= & (As, Es)                  & \mbox{Element}\\
  As & ::= & \{\} | \{A\} | As \cup As & \mbox{Attributes}\\
  A  & ::= & (K, V)                    & \mbox{Attribute}\\
  K  & ::= & (\text{string})           & \mbox{Key}\\
  V  & ::= & (\text{string})           & \mbox{Value}\\
  \eda

\end{definition}
A GUI state consists of a list of elements.
We adopt Haskell's notation for lists and write $x:xs$ to denote
a list with head $x$ and tail $xs$.
We assume that the list of objects $[x_1 \dots ,x_n]$
is a shorthand for $x_1 : \dots : x_n :[]$, where $[]$ denotes the empty list.
Each element is represented by a set of attributes and a list of child elements.
An attribute is a simple key-value pair.
For simplicity, we use strings to represent their content.

The AGS representation is derived from a concrete GUI state as shown in the previous section.
Suppose we have a browser with a single opened tab that displays the following HTML:
\begin{verbatim}
<html lang="en">
  <head></head>
  <body>
    <button name="foo">bar</button>
  </body>
</html>
\end{verbatim}
The HTML page can be represented in terms of AGS as follows:
\bda{lll}
G & = & [ (\{ \attribute{lang}{en}, \attribute{type}{html} \}, \\
  &   & \hspace{3pt} [( \{ \attribute{type}{head} \}, []), \\
  &   & \hspace{6pt} ( \{ \attribute{type}{body} \}, [E]) ] ) ]
\eda
Where $E$ represents the \texttt{<button>} element:
\bda{lll}
E & = & (\{ \attribute{name}{foo}, \\
  &   & \hs \attribute{text}{bar}, \\
  &   & \hs \attribute{type}{button} \}, [])
\eda
At this point, the AGS looks like a simple mapping of HTML to another structure.
Of course, this is not sufficient for visual testing because in the case of web-based GUIs, the appearance is influenced by more than just HTML.
While HTML mostly defines structural information, CSS typically provides style information.
Furthermore, JavaScript may be used to modify the DOM.
Our experiments show that such (visual) information can be obtained via a platform-specific adapter from the actual GUI to our AGS.
For example, we may want to include the browser tab as a parent element with attributes such as \attribute{title}{browser tab title}.
Further attributes (keys) are
\texttt{background-color}, \texttt{x}, \texttt{y}, \texttt{height}, \texttt{width} etc.
For brevity, we ignore these attributes in the examples we consider in this section.

A \emph{filter} function $\filter$ may remove
elements (with or without children) or attributes from an AGS instance $G$
such that they are ignored in subsequent steps,
but keeps the hierarchical structure of $G$ intact.

Based on this abstraction, we introduce a highly-automated method for visual testing of web-based GUIs.
In a first step, we establish the notions of \emph{equality}
and \emph{inequality} among two AGSes.

\subsection{Equality and Inequality}
\label{subsec:inequality}

We write $\equality{G}{G'}$ to denote equality and
$\inequality{G}{G'}$ to denote inequality among two AGSes $G$ and $G'$.
Defining inequality as the negation of equality is not sufficient.
If equal, no changes occurred;
otherwise, there have been some changes and in this situation we wish to provide diagnostic information to the user to track down the reasons for these changes.
Hence, we specify an $\inequality{}{}$ relation that yields
the precise reason for failure of equality.
$\equality{}{}$ is then simply defined as the negation of $\inequality{}{}$.
To specify $\inequality{}{}$, we make use of declarative inference rules.

\begin{definition}[Inequality]

  \bda{c}
  \rlabel{Es-Ineq} \
  \myirule{ Es = [ E_1, \dots, E_n ]
          \\ Es' = [ E_1', \dots, E_m' ]
          \\ (n \not= m \vee \exists i\in \{1,\dots,{\mathit \max}(n,m)\}. \inequality{E_i}{E_i'})
          }
          {\inequality{Es}{Es'}}
    
  \\
  \\

  \rlabel{As-Ineq} \
  \myirule{As = \{ A_1, \dots, A_n, B_1, \dots, B_m \}
        \\ \,\,\, \cup \, \, \, \{(K_1,V_1),\dots,(K_l, V_p)\}
        \\ As' = \{ A'_1, \dots, A'_n, B'_1, \dots, B'_m \}
        \\ \,\,\,  \cup \,\,\, \{(K'_1,V'_1),\dots,(K'_l, V'_q)\}
  \\ \equality{A_i}{A'_i} \ \mbox{for $i=1,\dots,n$}
  \\ \inequality{B_j}{B'_j} \ \mbox{for $j=1,\dots,m$}
  \\ \{K_1,\dots,K_p\} \cap \{K'_1,\dots,K'_q \} = \{ \}
  \\ ({\mathit \max}(p,q) \leq 1 \vee m \leq 1)
          }
          {\inequality{As}{As'}}
  
	\\
	\\
	
  \ba{cc}
  \rlabel{E-Ineq} \
  \myirule{ \inequality{As}{As'} \\ \vee \\
            \inequality{Es}{Es'}
          }
          {\inequality{(As, Es)}{(As', Es')}}
  
  &
  \rlabel{A-Ineq} \
  \myirule{ K = K' \\ V \not= V' }
          {\inequality{(K,V)}{(K',V')}}          
  \ea              
  \eda
\end{definition}
Inference rules are to be read as follows:
The statements above the bar are the preconditions (premise).
The statement below the bar is the conclusion.
If the premise can be satisfied, we can derive the conclusion.
The premise represents detailed diagnostic information that explains
the reason for inequality.

For each syntactic case, there exists an inference rule.
Rule \rlabel{Es-Ineq} identifies two lists of elements
as inequal, if their size differs or some elements are inequal.
For elements, inequality is due to their attributes
or some child elements, see rule \rlabel{E-Ineq}.
The interesting inequality inference rules are \rlabel{As-Ineq} and \rlabel{A-Ineq}.
Rule \rlabel{A-Ineq} identifies attributes with matching keys but non-matching values.
Rule \rlabel{As-Ineq} identifies all attributes $A_i$ and $A_i'$ that are matching and
attributes $B_j$ and $B_j'$ that are non-matching due to different values.
This rule also collects all attributes that have no counterpart based on their key.

Our inference rules effectively represent Prolog-style Horn clauses.
Hence, to decide inequality we resolve (rewrite)
matching conclusions by the respective premise
until no further resolution steps are possible.
It is easy to see that  inference rules are terminating.

Recall the elements $E$ and $E'$ from Section~\ref{sec:overview}.
$E$ and $E'$ are inequal, and based on the inequality relation
we can derive detailed diagnostic information for the reason(s)
of inequality:
\bda{ll}
\label{arr:ineq-example}
& \inequality{E}{E'}
\\
\rightarrow
& (1) \ \equality{\attribute{id}{login}}{\attribute{id}{login}}
\\
& (2) \ \inequality{\attribute{b-color}{\#0}}{\attribute{b-color}{\#2}}
\\
& (3) \ \inequality{\attribute{text}{Sign in}}{\attribute{text}{Log in}}
\\
& (4) \ \inequality{\attribute{type}{a}}{\attribute{type}{button}}
\\
& (5) \ \{ \texttt{href} \} \cap \{ \texttt{onclick} \} = \{ \}           
\eda
We use \texttt{b-color} as an abbreviation for
\texttt{background-color} and shorten the color code values.
We write $\rightarrow$ to denote the construction of the inequality
proof, where statements \numrange{1}{5} represent the conditions
collected from the leaf nodes in the inequality proof.

Often, simple structural changes immediately lead to inequality.
Consider the AGSes $G = [E, E_2]$ and $G' = [E_2, E']$, where $E, E'$ as above and $E_2$ is some other but unchanged element.
Our inequality test reports $\inequality{E}{E_2}$ and $\inequality{E_2}{E'}$.
Rather, we wish to report $\inequality{E}{E'}$ and $\equality{E_2}{E_2}$.

\subsection{Fairly Similar and Element Identification}
\label{subsec:fairly-similar}

In order to address the above issue,
we have to identify functionally equivalent GUI elements between two AGSes,
which then can be used to derive inequality.
This is also known as the \emph{GUI element identification problem}, coined by McMaster and Memon~\cite{mcmaster09},
that typically occurs in test script repair.
For each GUI element in $G \cup G'$, one has to decide whether it has been \emph{deleted} (present in expected but not in actual),
\emph{created} (present in actual but not in expected) or if it is \emph{maintained} (present in both but possibly modified).
Therefore, we first assume a weaker form of equality.

\begin{definition}[Fairly Similar]
  Let $E$ and $E'$ be two elements.
  We write $\maintainedBy{E}{E'}$ to denote that $E$
  and $E'$ are \emph{fairly similar}.
	We sometimes also say that $E$ is \emph{maintained by} $E'$.
\end{definition}
In contrast to McMaster and Memon,
we don't limit the GUI element identification problem to \enquote{actionable} elements (e.g. buttons),
as we also want to provide diagnostic information for \enquote{non-actionable} elements (e.g. labels).

Depending on the application context, we might only want to consider leaf or other selected elements.
For this purpose, we introduce the notion of extracted elements.

\begin{definition}[Extracted Elements]
  Let $G$ be an AGS.
  We write $\extract$ to denote
  a function that, when applied to $G$, yields a set of elements.
  We refer to $\extract$ as the \emph{element extractor} function
  and for each element in $\extract(G)$ as an \emph{extracted element}.
\end{definition}
Compared to $\filter$, $\extract$ dissolves the hierarchical structure of $G$.

Now, we give a concrete specification for the GUI element identification problem in terms of AGS.

\begin{definition}[Element Identification]
  Let $G$ and $G'$ be the expected and actual AGSes.
  Then, we define
  \begin{align*}
    D & = \{ E_d \mid E_d \in \extract(G) \land \nexists E' \in \extract(G') : \maintainedBy{E'}{E_d} \}
    \\
    C & = \{ E_c \mid E_c \in \extract(G') \land \nexists E \in \extract(G) : \maintainedBy{E}{E_c} \}
    \\
    M & = \{ (E, E') \mid E \in \extract(G) \land \exists E' \in \extract(G') : \maintainedBy{E}{E'} \}
  \end{align*}        
  where we refer to $D$ as the set of \emph{deleted} elements,
  $C$ as the set of \emph{created} elements and
  $M$ as the set of \emph{maintained} pairs.

  We assume a function $\idente$
  that computes and returns these three sets.
  For $G$ and $G'$,
  we find $\idente(G, G') = (D, C, M)$,
  where $E,C, M$ are defined as above.
\end{definition}

Based on equality, inequality and element identification,
we can define the individual AGS execution steps (\numrange{3}{7})
motivated in the overview Section~\ref{sec:overview} as follows:

\begin{definition}[AGS Executor for Golden Master Testing]
  Let $G$ and $G'$ be the expected and actual AGSes.
\begin{algorithmic}[1]
	\If{$\equality{G}{G'}$ derivable}
		\State report GUI is OK
	\Else
		\State $R \gets \emptyset$ \Comment{initialize inequality results}
		\State $(D, C, M) \gets \idente(G, G')$
		\For{each $(E, E') \in M$}
			\State $R \gets R \cup \inequality{E}{E'}$ \Comment{collect inequality result}
		\EndFor
		\State report diagnostic information for $D, C, R$
	\EndIf
\end{algorithmic}
\end{definition}

\subsection{Instances}
\label{subsec:instances}

In the following,
we consider three specific instances of fairly similar
to explore the design space,
and which we will also compare in our empirical evaluation.
We first define the extractor function shared by all instances:
\bda{lcl}
\extract([])                & = & \{ \} \\
\extract([E_1, \dots, E_n]) & = & \extract(E_1) \cup \dots \cup \extract(E_n) \\
\extract((As, Es))          & = & \{ (As, []) \} \cup \extract(Es)
\eda
For each element, we extract its set of attributes.
We consider parent and child elements, but all extracted elements are flattened
by leaving their children to be empty.
Based on this extractor function, the fairly similar relation only needs to
compare sets of attributes.

Let $I$ be a set of attribute keys and $As$ be a set of attributes.
We define $\projKV{As}{I} = \{ (K, V) \mid (K, V) \in As \wedge K \in I \}$
and $\projK{As}{I} = \{ K \mid (K, V) \in As \wedge K \in I \}$.

We assume two sets $I_s$ and $I_w$ of attribute keys,
where $I_s$ is referred to as the \emph{strong} identifying set of keys
and $I_w$ is referred to as the \emph{weak} identifying set of keys.
We assume that $I_s \cap I_w = \emptyset$.

\begin{definition}[Fairly Similar by Strong and Weak Keys]
  We define an instance of the fairly similar relation
  referred to as \emph{fairly similar by strong and weak keys} as follows:
	$\maintainedByI{(As, [])}{(As', [])}$ holds if
	\begin{inparaenum}[(1)]
		\item $\equality{\projKV{As}{I_s}}{\projKV{As'}{I_s}}$ and
		\item $\projK{As}{I_w} = \projK{As}{I_w}$.
	\end{inparaenum}
\end{definition}
The first condition says that for all strong attribute keys
the values must remain unchanged.
The second condition says that for all weak attribute keys
we only check if the key is present and ignore the value.
Attribute keys not appearing in $I_s \cup I_w$ are ignored.

It is easy to see that the above definition
of fairly similar satisfies the laws of an equivalence relation,
although this is not required by an instance.

Sometimes also strong attribute keys change,
hence, demanding equality among them doesn't work.
For this, we require a less strict relation.

\begin{definition}[Fairly Similar by Key Tests]
  We define an instance of the fairly similar relation
  referred to as \emph{fairly similar by key tests} as follows:
	$\maintainedByII{(As, [])}{(As', [])}$ holds if
	\begin{inparaenum}[(1)]
		\item $\forall (K, V_1) \in \projKV{As}{I_s},
				(K, V_2) \in \projKV{As'}{I_s} :
        f_K(V_1, V_2)$ and
		\item $\projK{As}{I_w} = \projK{As}{I_w}$.
	\end{inparaenum}
\end{definition}
For each key $K$ in $I_s$ we assume a function $f_K$ that takes two values $V_1, V_2$
(expected and actual) belonging to key $K$ and yields either true or false.
That is, we only check if they agree based on the fairly similar test $f_K$:
\bda{ll}
f_K =
\begin{cases}
	\text{true}  & \text{if } \simjw(V_1, V_2) \geq t \\
	\text{false} & \text{otherwise}
\end{cases}
\eda
While $\simjw$ refers to the Jaro-Winkler string similarity metric.
If $V_1$ and $V_2$ are similar up to a threshold $t \in [0, 1]$,
$f_K$ yields true, otherwise false.

Note that we currently limit value types to string,
but the more types we introduce, the more precise fairly similar can be.
Consider, for example, floats for the element positions,
then we could compute the distance between expected and actual in two-dimensional space.

The third instance of fairly similar we consider computes a match score
for each extracted element pair $(E, E')$.
Let $(As,[])$ and $(As',[])$ be two extracted elements.
We define a function to check if the values for some
key $K$ found in $As$ and in $As'$ are equal:
\bda{ll}
\eqv(As, As', K) =
\begin{cases}
	\text{1} & \text{if } (K, V) \in As \land (K, V') \in As : V = V' \\
	\text{0} & \text{otherwise}
\end{cases}
\eda
The \enquote{otherwise} case applies if the values for $K$ are different,
or if $K$ is missing in an attribute set.

If the average number of equal values for keys in $I_s \cup I_w$ is above a certain threshold $u \in [0, 1]$,
the element pair shall be fairly similar.

\begin{definition}[Fairly Similar by Element Matching]
  We define an instance of the fairly similar relation
  referred to as \emph{fairly similar by element matching} as follows:
	$\maintainedByIII{(As, [])}{(As', [])}$ holds if
	$(\frac{1}{n} \cdot \sum_{K \in I_s \cup I_w} \eqv(As, As', K)) \geq u$,
	where $n = \vert I_s \cup I_w \vert$ denotes the combined number of attributes.
\end{definition}
Note that the \emph{overall} best matching extracted element pair $(E, E')$ determines the final assignment.

We recap our example from Section~\ref{subsec:inequality}, where $G = [E, E_2]$ and $G' = [E_2, E']$.
We set $I_s = \{ \str{id} \}, I_w = \{ \str{text}, \str{type} \}$ and $t = 0.9, u = 0.3$.

At first, we cannot derive $\equality{G}{G'}$
because $\inequality{E}{E_2}$ and $\inequality{E_2}{E'}$.
We then attempt to identify the correct element pairs to be used for comparison.
In the course of this, we realize that $E_2 \in G$ is the same as $E_2 \in G'$,
so all instances of fairly similar resolve $\maintainedBy{E_2}{E_2}$.

When it comes to $E, E'$,
fairly similar by strong and weak keys holds as the strong key \texttt{id} remains unchanged
and the weak keys \texttt{text} and \texttt{type} are still present in $As, As'$.

Fairly similar by key tests also holds because the Jaro-Winkler similarity for the unchanged \texttt{id} is 1.0
(i.e. greater than $t$),
and the weak keys are tested as above.

Fairly similar by element matching holds as well:
$\eqv(As, As', \str{id})$ yields 1 and 0 in the other cases.
This results in a match score of
$\frac{1}{3} \cdot (1 + 0 + 0) = \frac{1}{3} = 0.\overline{3}$,
which is greater than $u$.

Thus, we conclude $D = C = \emptyset$ and $M = \{ (E, E'), (E_2, E_2) \}$,
so we report $\inequality{E}{E'}$ and $\equality{E_2}{E_2}$.

Note that certain changes to strong keys may quickly lead
to wrongly identified deleted and created elements.
Suppose the \texttt{id} of \str{login} becomes \str{signin},
then $\equality{\projKV{As}{I_s}}{\projKV{As'}{I_s}}$ fails.
And since the Jaro-Winkler similarity (0.59) drops below $t$,
$f_K$ fails too.
Now that all values for the keys in $I_s \cup I_w$ are different,
the match score is 0 and $\maintainedByIII{}{}$ also doesn't hold anymore.
As a consequence, we conclude $D = \{ E \}, C = \{ E' \}, M = \{ (E_2, E_2) \}$.

Ergo, the quality of fairly similar strongly depends on determining adequate strong and weak keys as well as appropriate thresholds.
For further discussion, see our upcoming experiments.

\section{Empirical Evaluation}
\label{sec:eval}

We have implemented a platform-independent test framework called recheck~\cite{retest19a},
which contains the AGS-centric components.
It follows the principles of golden master testing and offers mechanisms to
create and maintain visual tests.
recheck is extensible and provides interfaces to support specific GUI technologies.
We used this as a foundation to create an AGS adapter for web-based GUIs named recheck-web~\cite{retest19b},
which operates on top of the Selenium WebDriver API~\cite{selenium16}.
We currently provide a Java-based SDK that seamlessly integrates in existing Selenium tests.

To empirically evaluate the implementation of our approach,
we created an open-source benchmark for visual testing of web-based GUIs~\cite{kraus19}.
The benchmark suite consists of offline versions from 20 of the most popular websites~\cite{wikipedia19}.
We downloaded the pages to have better control and
to minimize external influences during the experiments.
Every web page was modified to simulate typical GUI changes.
The benchmark is implemented as a Maven-based Java project and offers several utilities.
For example, to measure execution times and
to count the number of GUI elements.

Using this benchmark,
we want to answer the following three research questions~(RQs):
\begin{description}
	\item[RQ$_1$] How long does it take for each fairly similar instance to compare two GUI states?
	\item[RQ$_2$] How does each fairly similar instance perform when it comes to element identification?
	\item[RQ$_3$] Using the best-performing fairly similar instance, what types of GUI changes can we detected?
\end{description}
In the context of our study,
RQ$_1$ and RQ$_2$ are directed towards quantitatively measuring the performance of our implementation,
also to see if it is a practical tool for visual testing.
RQ$_3$, however, aims at qualitatively measuring the effectiveness of our approach.

As an extension of RQ$_3$,
we additionally compare our implementation against Applitools using
their own demo application from Figure~\ref{fig:login-all},
where we highlight some findings from the error reports of both tools.
The assumption we made is that if we manage to compete with Applitools
in an environment optimized for their own tool,
it would be a strong indicator for the
usefulness of our implementation.

Next, we outline our evaluation setup in Section~\ref{subsec:eval-setup}.
We then present and discuss the results in Section~\ref{subsec:results},
followed by limitations as well as possible threats to validity in Section~\ref{subsec:limits-threats}.

\subsection{Evaluation Setup}
\label{subsec:eval-setup}

As mentioned above, we utilized our own open-source benchmark.
It is available as a Git repository on GitHub,
where we also published all relevant data that was collected during our experiments.
The Git repository can be cloned to reproduce the evaluation,
either locally or on Travis CI,
for which we provide a build configuration.
(Please consider the reproducibility notes from the \texttt{README.md} file.)

All tests have been executed on a virtual machine running Ubuntu 18.04.3,
with a \SI{2.3}{\giga\hertz} Intel Xeon Gold 6140 CPU and \SI{4}{\giga\byte} of RAM.
To honor the importance of cross-browser and \=/platform visual testing,
we ran the tests on two different browsers with different resolutions:
Chromium (79.0.3945.79 via ChromeDriver 79.0.3945.79) on 1080p and
Firefox (72.0.1 via GeckoDriver 0.26.0) on 720p.

\begin{table}[b]
	\scriptsize
	\centering
	\caption{Adopted GUI changes taxonomy~\cite{moran18}.}
	\label{tab:gui-changes-taxonomy}
	\begin{tabular}{ll} \toprule
		Change category &
		Description \\ \midrule
		\multirow{3}{*}{1\quad Text change}
		&
		1.1\quad Text content change \\
		&
		1.2\quad Font change \\
		&
		1.3\quad Font color change \\ \addlinespace
		\multirow{2}{*}{2\quad Layout change}
		&
		2.1\quad Horizontal or vertical element translation \\
		&
		2.2\quad Horizontal or vertical element size change \\ \addlinespace
		\multirow{3}{*}{3\quad Resource change}
		&
		3.1\quad Deleted element \\
		&
		3.2\quad Created element \\
		&
		3.3\quad Element type change \\ \bottomrule
	\end{tabular}
\end{table}

To simulate typical changes,
we reused the GUI changes taxonomy recently introduced by Moran et al.~\cite{moran18} and
adopted it as shown in Table~\ref{tab:gui-changes-taxonomy}.
We first created a golden master for each original web page.
To ensure that the pages are in a steady state,
we added a fixed page load wait of \SI{3}{\second}.
We also filtered out dynamic elements such as carousels and
various (mostly invisible) attributes by adding them to the \texttt{recheck.ignore} file.
This text file represents the user interface for the AGS filter function $\filter$ by
supporting a set of rules to ignore element and attribute differences.
In addition, pixel diffs up to 25 are ignored too.

Finally, we manually introduced every GUI change type from the taxonomy,
i.e., 8 per web page and $8 \cdot 20 = 160$ in total.
The changes have been applied by multiple people to foster diversity.

To answer RQ$_1$,
we ran each test 10 times against the modified page
using the 3 fairly similar instances from Section~\ref{subsec:instances}.
For each browser (2) and website (20),
this resulted in $10 \cdot 2 \cdot 20 = 400$ test executions per instance.
This allowed us to obtain a realistic estimate for the average execution times.

As part of our page modifications,
GUI elements have been deleted, created and maintained.
The affected elements should be identified correctly between the expected and actual GUI states.
For RQ$_2$, we again compared our 3 fairly similar instances
to determine the numbers of deleted, created and maintained elements.
The better the element identification, the better the diagnostic information.

Based on the results of RQ$_1$ and RQ$_2$,
we selected the best performing fairly similar instance to answer RQ$_3$.
Multiple people examined the test reports in detail and
categorized the resulting diagnostic information.
Ideally, a single difference is reported for each introduced GUI change,
i.e., only true positives~(TPs).
If a change isn't reported,
we interpreted this as a false negative~(FN);
if an additional change is reported,
we interpreted this as a false positive~(FP).
True negatives~(TNs) are irrelevant for this experiment
as this would only denote the number of unchanged elements.

Note that a GUI change may affect other elements too and
the user must decide whether this information is desired or not.
Given the amount of pages and browsers of the benchmark,
this is difficult to do in a consistent and objective manner.
Therefore, we treat such diffs as TPs because they reflect actual changes and
can be easily ignored using the AGS filter mechanism.

For the fairly similar instances, we set
$I_s = \{ \str{id}, \str{path} \}$ and
$I_w = \{ \str{type}, \str{x}, \str{y}, \str{width}, \str{height} \}$.
In addition, we included the keys
$\texttt{class}, \texttt{id}, \texttt{name}, \texttt{text}$
for fairly similar by element matching.
Thresholds are set to $t = 0.9, u = 0.3$.

\subsection{Experimental Results \& Discussion}
\label{subsec:results}

The results for RQ$_1$ are shown in Figure~\ref{fig:rq1-plot}.
The execution times of $\maintainedByI{}{}$ (by strong and weak keys)
were between \SIrange{1435}{61091}{\milli\second},
with an average of \SI{6517.72}{\milli\second}.
$\maintainedByII{}{}$ (by key tests) was between \SIrange{1513}{56617}{\milli\second}
with \SI{6350.37}{\milli\second} on average, and
$\maintainedByIII{}{}$ (by element matching) was between \SIrange{1267}{57188}{\milli\second}
with \SI{5822.29}{\milli\second} on average.
That is, $\maintainedByIII{}{}$ was faster than $\maintainedByII{}{}$,
and $\maintainedByII{}{}$ faster than $\maintainedByI{}{}$.

Figure~\ref{fig:rq2-plot} shows the results for RQ$_2$.
$\maintainedByI{}{}$ reported \num{2433} differences
(586 deleted, \num{1472} created, 375 maintained) and
$\maintainedByII{}{}$ reported \num{21440} differences
(\num{9946} deleted, \num{10881} created, \num{613} maintained).
$\maintainedByIII{}{}$ performed best by reporting 762 differences
(\num{50} deleted, \num{43} created, \num{669} maintained).
Remember, originally we only introduced 160 GUI changes per browser (i.e. 320 combined),
although this doesn't account the thereby affected elements.

As pointed out in Section~\ref{subsec:instances},
wrong element identifications can increase the number of
deleted and created elements, and thus the amount of differences.
The reason why $\maintainedByII{}{}$ performed so bad is
because the fairly similar test $f_K$ wasn't strict enough.
This messed up the element identification in a way
such that incorrect pairs have been created,
which lead to many other differences due to mixed up elements.

\begin{figure}
	\subfloat[\label{fig:rq1-plot} Time measurements in milliseconds.]{
		\includegraphics[width=0.96\columnwidth]{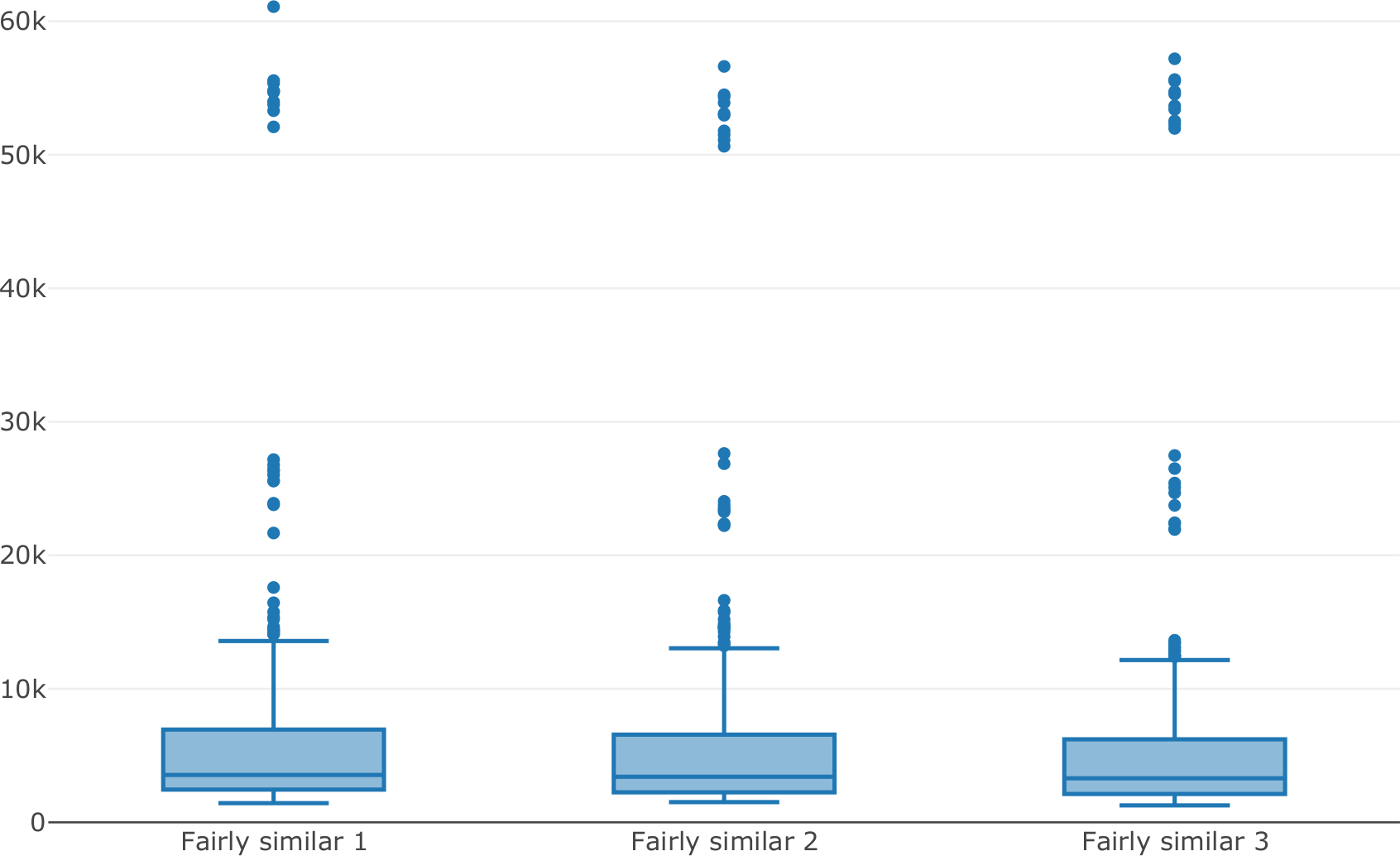}}
	\\
	\subfloat[\label{fig:rq2-plot} Deleted, created and maintained elements.]{
		\includegraphics[width=\columnwidth]{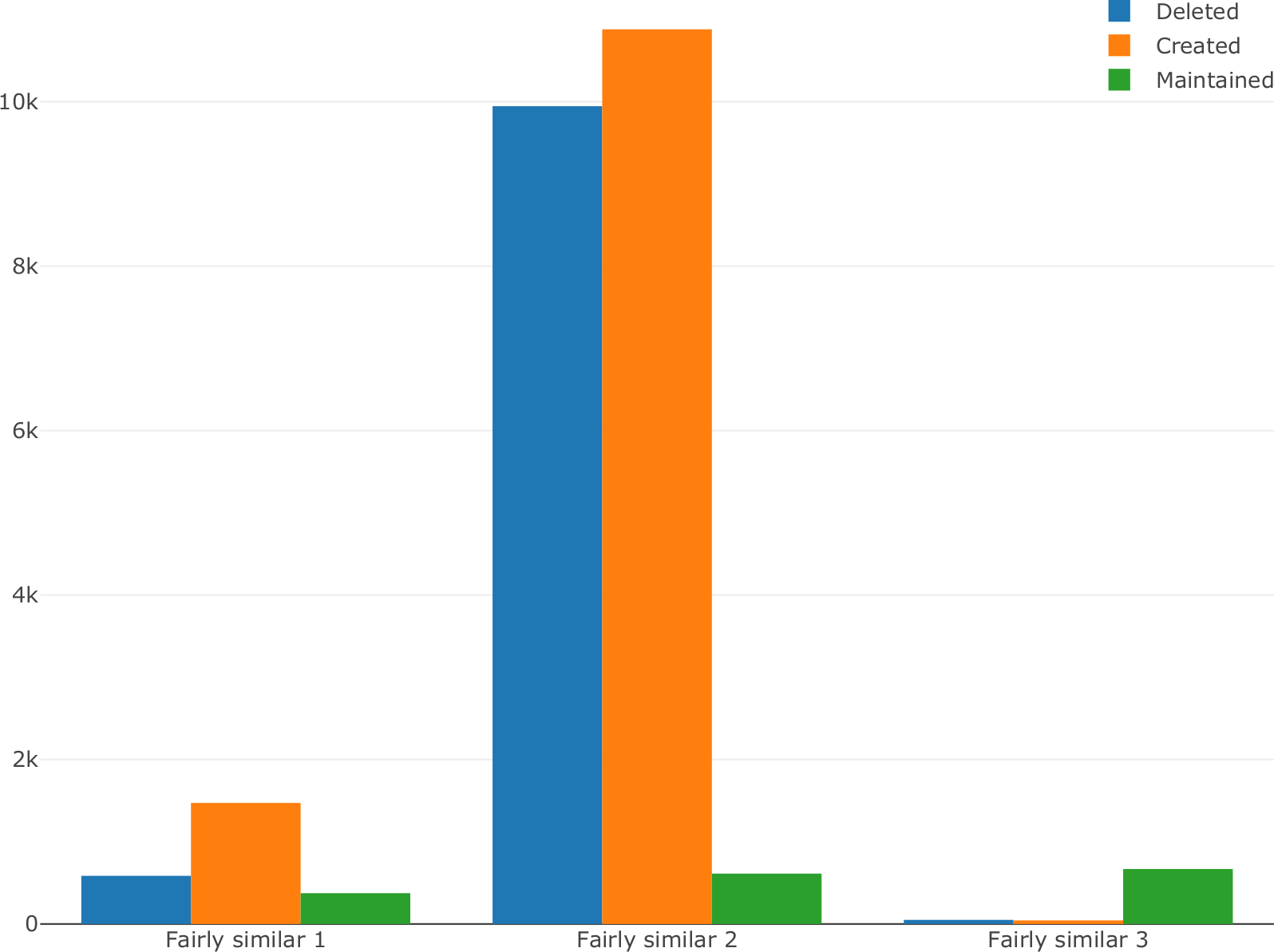}}
	\caption{Results for RQ$_1$ and RQ$_2$ by fairly similar instance.}
\end{figure}

Because $\maintainedByIII{}{}$ overall performed best during the previous experiments,
we selected it to continue with RQ$_3$.
An overview of the results can be found in Table~\ref{tab:rq-1-results}.
First, it should be noticed that the number of elements (and TPs, FNs, FPs)
sometimes varies between Chromium and Firefox.
This can be caused by different DOMs or
due to responsive design,
which adjusts the content of a web page
with respect to the given screen size and resolution.
(Remember, for Chromium we used 1080p and for Firefox 720p.)
In total, we tested our implementation against
\num{19588} elements in Chromium and \num{19672} elements in Firefox.

\begin{table}
	\scriptsize
	\centering
	\caption{Overview of TPs, FNs, FPs in terms of GUI change detection performance from RQ$_3$.}
	\label{tab:rq-1-results}
	\begin{tabular}{l rr rrr rrr rrr rrr} \toprule
		&
		\multicolumn{4}{c}{Chromium (1080p)} &
		\multicolumn{4}{c}{Firefox (720p)} \\
		Website &
		\# elements &
		TP &
		FN &
		FP &
		\# elements &
		TP &
		FN &
		FP \\ \midrule
		360.cn &
		\num{1250} &
		26 &
		0 &
		1 &
		\num{1250} &
		31 &
		0 &
		1 \\
		alipay.com &
		\num{95} &
		11 &
		0 &
		0 &
		\num{95} &
		11 &
		0 &
		0 \\
		apple.com &
		\num{633} &
		11 &
		1 &
		2 &
		\num{633} &
		11 &
		1 &
		3 \\
		baidu.com &
		\num{152} &
		9 &
		2 &
		2 &
		\num{152} &
		9 &
		2 &
		2 \\
		bbc.com &
		\num{1245} &
		11 &
		0 &
		1 &
		\num{1289} &
		0 &
		8 &
		2 \\
		blogspot.com &
		\num{291} &
		12 &
		0 &
		3 &
		\num{291} &
		11 &
		0 &
		3 \\
		csdn.net &
		\num{1992} &
		22 &
		0 &
		20 &
		\num{1992} &
		22 &
		0 &
		20 \\
		ebay.com &
		\num{2176} &
		49 &
		0 &
		27 &
		\num{2167} &
		48 &
		0 &
		26 \\
		facebook.com &
		\num{611} &
		10 &
		0 &
		8 &
		\num{611} &
		10 &
		0 &
		8 \\
		github.com &
		\num{972} &
		20 &
		0 &
		2 &
		\num{972} &
		24 &
		0 &
		4 \\
		google.com &
		\num{227} &
		14 &
		1 &
		0 &
		\num{227} &
		14 &
		1 &
		0 \\
		jd.com &
		\num{266} &
		0 &
		8 &
		1 &
		\num{266} &
		0 &
		8 &
		2 \\
		linkedin.com &
		\num{644} &
		7 &
		2 &
		0 &
		\num{652} &
		7 &
		2 &
		0 \\
		live.com &
		\num{445} &
		23 &
		1 &
		1 &
		\num{445} &
		23 &
		1 &
		3 \\
		soso.com &
		\num{25} &
		19 &
		0 &
		2 &
		\num{25} &
		19 &
		0 &
		2 \\
		stackoverflow.com &
		\num{837} &
		16 &
		1 &
		0 &
		\num{837} &
		16 &
		1 &
		0 \\
		twitter.com &
		\num{667} &
		15 &
		1 &
		0 &
		\num{673} &
		15 &
		1 &
		0 \\
		vk.com &
		\num{278} &
		12 &
		0 &
		0 &
		\num{280} &
		12 &
		0 &
		0 \\
		wikipedia.org &
		\num{942} &
		15 &
		1 &
		0 &
		\num{942} &
		15 &
		1 &
		0 \\
		youtube.com &
		\num{5840} &
		3 &
		4 &
		5 &
		\num{5873} &
		3 &
		4 &
		5 \\ \addlinespace
		Total &
		\num{19588} &
		305 &
		22 &
		75 &
		\num{19672} &
		301 &
		30 &
		81 \\ \bottomrule
	\end{tabular}
\end{table}

Based on the amount of TPs, FNs and FPs, we can determine
how many selected items are relevant,
and how many relevant items are selected:
\bda{ll}
\text{Precision} = \frac{TP}{TP + FP} & \text{Recall} = \frac{TP}{TP + FN}
\eda
For Chromium, we achieved a precision of \SI{80.26}{\percent} and recall of \SI{93.27}{\percent},
in Firefox a precision of \SI{78.80}{\percent} and recall of \SI{90.94}{\percent}.
Overall, this results in a precision of \SI{79.53}{\percent} and recall of \SI{92.10}{\percent}.

An observation we made is that
on some pages (e.g. \url{bbc.com} in Firefox) the algorithm
wasn't able to correctly identify the root element.
This resulted in the entire page being treated as deleted and inserted.
We interpreted this as 8 FNs (missing original changes) and
2 FPs (wrongly deleted/inserted root).

We also observed that some FNs were caused by filtered elements.
This is likely because we involved multiple people in our experiments,
and those that introduced the changes didn't know which elements
have been considered unimportant by the person who has created the \texttt{recheck.ignore}.

\begin{figure}
	\includegraphics[width=0.95\columnwidth]{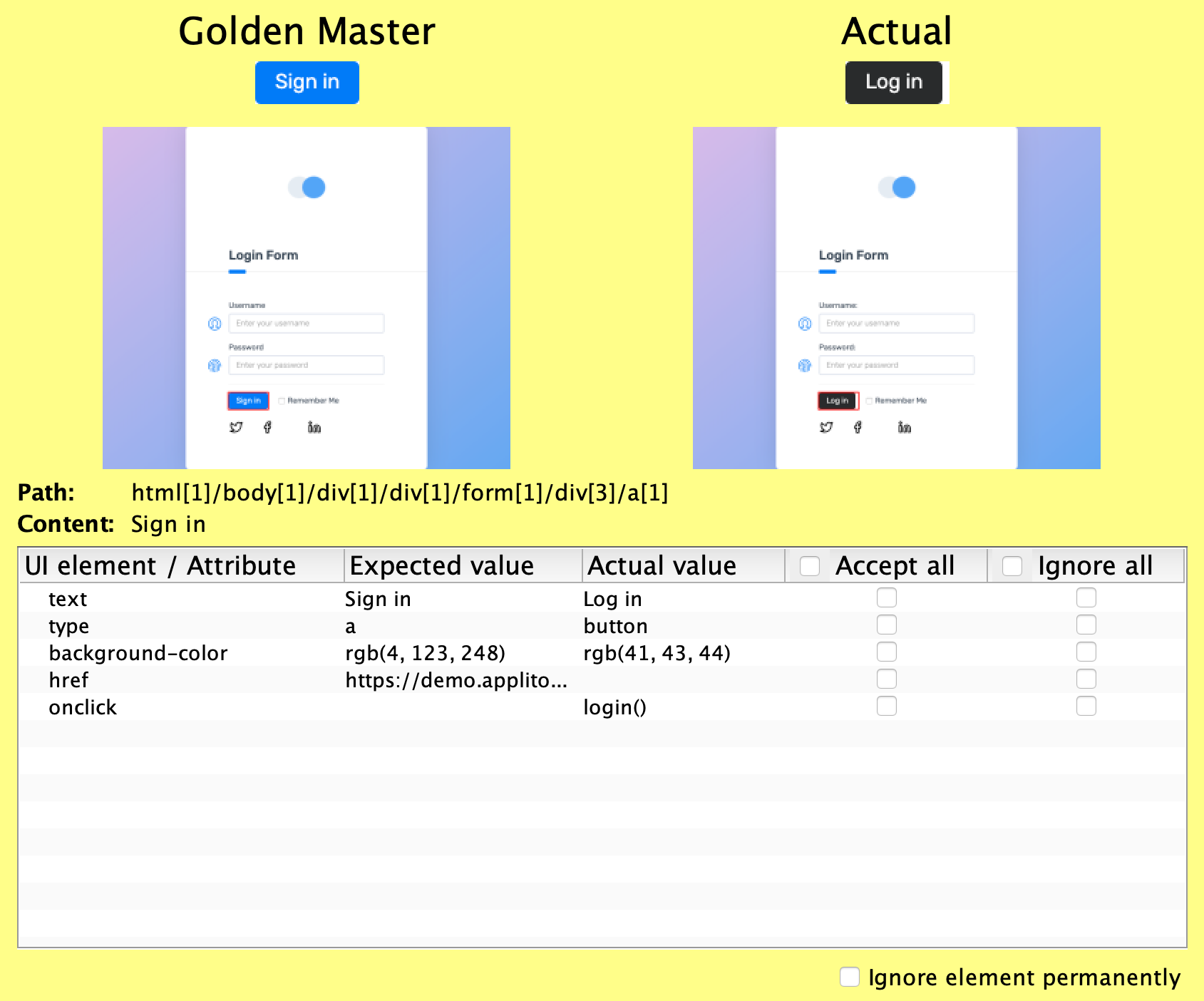}
	\caption{Test report excerpt for changes to the \enquote{Sign in} button from Applitools' demo.}
	\label{fig:example-test-report}
\end{figure}

As an extension of RQ$_3$,
we also compared ourselves against Applitools
using their own demo application,
where we wanted to highlight some findings from the error reports.
recheck-web managed to detect all introduced GUI changes,
whereas Applitools wasn't able to expose the element type change (3.3).
This is because Applitools exclusively tracks visible information.

Another thing we noticed is that Applitools'
screenshot was covered with many diffs,
also in places where the page wasn't modified.
In contrast, Figure~\ref{fig:example-test-report}
shows the changes to the former \enquote{Sign in} button
displayed in our GUI~\cite{retest19d}.
To assist the user, we include a screenshot
of the affected element and the entire screen,
in which the element is highlighted.
Below, the attribute changes are listed:
\begin{inparaenum}[(i)]
	\item \texttt{text},
	\item \texttt{type},
	\item \texttt{background-color},
	\item \texttt{href} and
	\item \texttt{onclick}.
\end{inparaenum}
This corresponds exactly to our inequality example on p.~\pageref{arr:ineq-example}.
A change can be either accepted, which updates the corresponding golden master,
or ignored, which updates the \texttt{recheck.ignore}.
If a change is unintended, the test report serves as documentation.

As most changes can be broken down into inequal attributes,
we are able to further automate test maintenance.
Suppose we have several test cases that all pass the login screen.
If we apply the \texttt{text} change from above,
all of these tests fail with differences.
However, for each test case we have the very same failure case \rlabel{A-Ineq},
where $K = K' = \str{text}$, but $V = \str{Sign in}$ and $V' = \str{Log in}$.
If the user accepts this change once,
we can suggest to apply the decision
to all the same changes---even across browsers and platforms---since we are able
to identify the button element also elsewhere.

For further details such as screenshots and raw data,
please refer to the benchmark repository.

\subsection{Limitations \& Threats to Validity}
\label{subsec:limits-threats}

While the experimental results demonstrate the effectiveness and efficiency of our approach,
it relies on an accurate adapter.
This mostly influences the performance of our method for visual testing and
requires platform-specific knowledge to take characteristics of the GUI technology into account.
We observed that large, complicated DOMs (e.g. \url{youtube.com})
can make it difficult to understand the reported changes.
This is an advantage of image comparison because it works out of the box for almost every platform.
But there is a serious trade-off in the quality of test results~\cite{alegroth18}.
And without metadata, pixel-based approaches can hardly provide useful diagnostic information.

recheck and recheck-web are currently designed to be used
in hand-crafted test scripts,
which is the de-facto standard in the industry~\cite{gao16}.
An extended prototype could leverage exploration strategies
to compare two versions of a GUI autonomously, similar to GCat~\cite{moran18}.
This would require no adaptions in our implementation
as a corresponding crawler could easily leverage our existing APIs
to automatically integrate visual checkpoints.

Another limitation is that layout errors like overflows
can be detected, but they aren't classified properly.
Since the AGS contains semantic information about the GUI,
it can basically be used to statically detect layout failures too,
and report them to tools such as Viser~\cite{althomali19}.

In terms of \emph{external threats} to our experimental results,
the set of websites may not generalize well.
Although we carefully selected 20 examples from the most popular websites,
other pages possibly use special GUI libraries that lead to tricky situations.
For instance, randomly generated attributes would first create a lot of differences,
but the user can easily ignore these globally
to overcome this problem---just like we did in the evaluation.
Also frequently changing content can cause flaky tests,
but pixel-based approaches suffer from this limitation too.
Thus, the user is responsible to ensure a stable test environment.
With the AGS filter mechanism, this is less challenging and
compared to ignore regions when using image comparison,
not all information within the region is lost.

\emph{Internal validity threats} may be due to our set of RQs.
RQ$_1$ was dedicated to the execution times of our implementation.
A study among more tools and websites could have revealed performance bottlenecks.
For RQ$_2$, we tested our instances of fairly similar to see
if we are able to reliably identify elements and,
therefore, provide useful diagnostic information.
While this yielded promising results,
we might want to leverage advantages
like those by Grechanik et al.~\cite{grechanik18}
to see if we can further improve.
As part of RQ$_3$, we selected specific GUI changes from a modern taxonomy
we find important in the context of visual testing.
Applying these changes and categorizing the results
into TPs, FNs and FPs is to some degree a subjective task.

Overall, we believe that our RQs showed
the effectiveness and efficiency of our approach,
both qualitatively as well as quantitatively,
on the basis of comprehensible and transparent experiments.

\section{Related Work}
\label{sec:related-work}

There is a large body of work that deals with
GUI testing in general~\cite{ermuth16, gao15, mao17, moreira17, saddler17, song17}.
However, most of these contributions are dedicated to testing applications via the GUI,
rather than testing the GUI itself.
The same goes for many \enquote{visual GUI testing} approaches,
where CV and other techniques are used to interact with the SUT through the GUI~\cite{alegroth13, borjesson12, yeh09}.
We specifically want to look at methods for checking visual correctness in the following.

In the area of manual testing, some tools~\cite{luo18, quirktools19, testsize19}
allow to visually inspect the GUI using a range of common screen sizes and resolutions.
While this is helpful for sanity checks or exploratory testing,
manually checking for visual correctness is time consuming,
inconsistent and prone to human errors~\cite{althomali19}.

Various automated approaches~\cite{halle16, panchekha18, zaiats19} are based on formal specifications,
which define desired GUI properties.
A well-known industrial example for this is the Galen Framework~\cite{galenframework19}.
The user creates the specification using a DSL that
describes certain GUI element properties (e.g. width) and how GUI elements relate to each other.
But writing and maintaining specifications can be tedious and requires manual effort.
More importantly, a specification can only protect against expected changes~\cite{slatkin13}.
Changes which are not covered by the specification are implicitly allowed as they do not lead to test failures.

Golden master testing tries to overcome these issues through several measures.
Rather than having a manually-defined test oracle,
it is derived from the SUT~\cite{barr15}.
This also protects against unexpected changes because the entire GUI can be checked at once.
Industrial tools are available both open source (e.g. Depicted~\cite{slatkin16}) and commercial (e.g. Percy~\cite{percy19}).
We find that Applitools~\cite{applitools19a} is currently the most sophisticated industrial implementation.
CV algorithms attempt to only report perceptible differences,
an advantage over conventional pixel-by-pixel approaches.

Similar approaches exist in academia.
FieryEye~\cite{mahajan16}, for example, expects three inputs:
the web page to test, an \enquote{appearance oracle}
(i.e. a golden master screenshot of a previously correct version) and
an optional list of regions to ignore.
The tool combines perceptual image differencing~(PID) from WebSee~\cite{mahajan15} with
a probabilistic model to link visual differences to possible root causes.
Browserbite~\cite{saar16} and X-Pert~\cite{roychoudhary13} are comparable,
although these approaches focus on cross-browser testing.
That is, instead of a golden master from a previous version,
the page is loaded in two different browsers and then checked for differences.

But as with most tools based on image processing,
they tend to report false positive when minor or unimportant changes occur~\cite{alegroth18}.
The user can ignore certain areas on the screen, but then all information within this area gets lost.
With the AGS filter mechanism,
the user is able to precisely specify what to ignore and what not.
We can even filter out the entire text of a GUI to test the effects of internationalization to the visual appearance,
for which image processing is typically too fragile~\cite{alameer16}.
Furthermore, most pixel-based approaches have problems grouping together same or similar changes.
This confronts the user with more maintenance effort than necessary.

Tools like ReDeCheck~\cite{walsh17} aren't pixel-based,
but are generally only suitable for a subclass
of visual bugs (e.g. cross-browser layout issues) due to the oracle problem.
Such an approach cannot decide if, for instance, a color or font is correct.
Without an extensive specification, only a human can do this,
which is why it is important to provide useful diagnostic information and
facilities to speed up the manual verification process.
Basically, the AGS can be used to statically detect layout failures too,
and report these to tools such as Viser~\cite{althomali19} for further verification,
which we aim to study as part of our future work.

Another notable approach is GCat~\cite{moran18}.
GCat takes two commits from a version control system~(VCS) like Git.
For both commits, GCat automatically explores the GUI and extracts screenshots as well as GUI metadata.
The resulting data is processed to match screens between the two GUI versions and to filter duplicates.
Similar to the AGS, GCat constructs a tree-based representation of the GUI.
But after corresponding leaf-element pairs are identified, PID is used for change detection.
Each change is further analyzed and categorized, and a natural language summary is created.
The tool is implemented for Android, but only supports comparisons of screens captured on the same device.
As the evaluation shows, reasons for FPs are, e.g., ambiguities related to font changes.
The AGS filter mechanism allows the user to easily ignore such changes
by filtering the \texttt{font} attribute.

Similar to GCat and our approach is Guide~\cite{xie09},
a platform- and language-independent tool for differencing GUIs.
It obtains information about the GUI using the OS accessibility layers.
GUI states are represented as trees,
where nodes describe composite GUI elements (e.g. layout containers) and
leaves are primitive GUI elements (e.g. buttons).
Each GUI element is abstracted by a set of properties including their values.
The original mapping algorithm computed a match score
for each GUI element pair,
similar to $\maintainedByIII{}{}$.
Grechanik et al.~\cite{grechanik18} extended Guide and
used various tree-edit distance algorithms to
improve the GUI mapping precision.
Although Guide offers a GUI to review results,
it is not clear how useful the resulting diagnostic information is.
Furthermore, the effectiveness of Guide was evaluated
using mostly artificial GUIs---where running times reach up to \SI{3}{\hour}---and
not a large set of real-world websites.

\section{Conclusion \& Future Work}
\label{sec:conclusion-future-work}

We have presented a platform-independent abstract GUI state~(AGS),
for which we have defined structural relations to identify relevant GUI changes.
The AGS framework can be used to perform cross-browser and \=/platform visual testing and
allows to ignore changes unimportant for the user.
We also explored various strategies on top of the AGS
for identifying deleted, created and maintained elements
to provide useful diagnostic information.
Experiments showed that our implementation can
effectively and efficiently be used to visually test web GUIs,
and that we are able to compete with a sophisticated industrial tool.

In future work, we want to further improve our current implementation and
strive for broader platform support, especially on mobile,
to compare states across different GUI technologies.
We also aim to investigate how the AGS can be used
for other GUI testing-related research questions,
such as the GUI element identification problem for test script repair.
Furthermore, we want to explore how we can combine autonomous GUI exploration with our approach.

\begin{acks}
As part of the joint research project \enquote{Surili},
this work is supported by a grant (no. 01IS17092A) from
the German Federal Ministry of Education and Research.
\end{acks}

\printbibliography

\appendix

\section{AGS}

Further details on the AGS itself.

\subsection{Equality vs. Inequality}

We give a self-contained canonical definition of AGS equality.
The purpose is to ensure that our notion of inequality
is sound and complete w.r.t.~the canonical AGS equality relation.

\begin{definition}[Equality]

\bda{c}
\rlabel{Es-Eq1} \ \equality{[]}{[]}
  \qquad \
  \rlabel{Es-Eq2} \
  \myirule{ Es = [ E_1, \dots, E_n ]
          \\ Es' = [ E_1', \dots, E_n' ]
          \\ \equality{E_i}{E_i'} \ \mbox{for $i=1,\dots,n$}
          }
          {\equality{Es}{Es'}}

  \\
  \\
  \rlabel{As-Eq1} \ \equality{\{\}}{\{\}}
  \qquad \
  \rlabel{As-Eq2} \
  \myirule{ As = \{ A_1, \dots, A_n \}
          \\ As' = \{ A_1', \dots, A_n' \}
          \\ \equality{A_i}{A_i'} \ \mbox{for $i=1,\dots,n$}
          }
          {\equality{As}{As'}}  

  \eda
  \bda{c}
  \ba{cc}          
  \rlabel{E-Eq} \
  \myirule{ \equality{As}{As'}
           \\ \equality{Es}{Es'}
          }
          {\equality{(As, Es)}{(As', Es')}}

  &

  \rlabel{A-Eq} \
  \myirule{ K = K' \\ V = V' }
          {\equality{(K, V)}{(K', V')}}
  \ea
  \eda

\end{definition}
Rule \rlabel{Es-Eq2} checks for equality among two lists of elements
by checking for equality among the elements at the respective positions.
Rule \rlabel{Es-Eq1} represents the (base) case where
both lists are empty.
For attributes, we establish equality among the individual elements.
See rules \rlabel{As-Eq1} and \rlabel{As-Eq2}.
Equality among attributes holds if their key-value pairs are identical.
See rule \rlabel{A-Eq}.
In essence, two AGSes are equal if the order of elements remains intact
and the sets of attributes contain the same (identical) attributes.

Consider the following contrived example.
\bda{ll}
& \equality{[(\{ A_1, A_2 \}, [])]}
       {[(\{ A_2, A_1 \}, [])]}
\\
\rightarrow_{\rlabel{Es-Eq2}} &
  \equality{(\{ A_1, A_2 \}, [])}
       {(\{ A_2, A_1 \}, [])}
\\
\rightarrow_{\rlabel{E-Eq}} &
(1) \ \equality{[]}{[]} 
\\ &
(2) \ \equality{\{A_1, A_2 \}}
           {\{ A_2, A_1 \}}
\\
\rightarrow_{\rlabel{As-Eq2}} &
(3) \ \equality{A_1}{A_1} 
\\ &
(4) \ \equality{A_2}{A_2} 
\eda
Each resolution step is indicated via an arrow ($\rightarrow$),
where the arrow is labeled with the inference rules involved.
Subgoals are labeled via distinct numbers.
As we leave out the details of attributes (key-value pairs),
subgoals $\equality{A_i}{A_i}$ represent a base case.

Inequality implies that equality is not derivable and vice versa.
For this statement to actually hold we need a technical requirement.
We assume that AGSs are well-formed
which is generally the case.

\begin{definition}[Well-formed AGS]
  We say an AGS $G$ is \emph{well-formed} iff
  for each set $As$ of attributes in $G$
  we have that the keys in the set $As$ are distinct.
\end{definition}
For example, $[(\{ \attribute{k}{x}, \attribute{k}{y} \}, [])]$ is not well-formed
whereas $[(\{ \attribute{k}{x} \}, [( \{\attribute{k}{y} \}, [])])]$ is well-formed.
In general, the AGSes we deal with are always well-formed.

Assuming that our AGSes are well-formed,
we can achieve a canonical form for attributes
by sorting according to their keys.
See rule \rlabel{As-Eq2} where we can
reorder of the set of attributes.

\begin{proposition}
  Let $G$ and $G'$ be two well-formed AGSes.
  Then, we have that $\equality{G}{G'}$ is not derivable iff
  $\inequality{G}{G'}$ is derivable.
\end{proposition}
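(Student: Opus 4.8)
The plan is to prove both implications of the biconditional separately, and in fact the cleaner route is to establish the contrapositive in a strengthened form: for well-formed AGSes, exactly one of $\equality{G}{G'}$ and $\inequality{G}{G'}$ is derivable. Since the inference systems for both relations are defined by structural recursion on the same grammar ($G$, $Es$, $E$, $As$, $A$), I would proceed by structural induction on the common AGS structure, proving simultaneously the four statements ``for all well-formed $Es$, $Es'$: $\equality{Es}{Es'}$ not derivable iff $\inequality{Es}{Es'}$ derivable,'' and likewise for $E/E'$, $As/As'$, and $A/A'$. Mutual induction is needed because the element rules invoke the attribute and sub-element judgments.

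First I would dispatch the base cases and the simple structural cases. For attributes $A=(K,V)$ and $A'=(K',V')$: rule \rlabel{A-Eq} derives $\equality{A}{A'}$ exactly when $K=K'$ and $V=V'$, while \rlabel{A-Ineq} derives $\inequality{A}{A'}$ exactly when $K=K'$ and $V\not=V'$; if $K\not=K'$ neither is derivable, so at the level of a single attribute the statement is \emph{not} a strict dichotomy. This is the first subtlety and shows why the proposition is phrased at the level of whole AGSes (and attribute \emph{sets}): mismatched keys are absorbed into the set-level rules \rlabel{As-Eq2}/\rlabel{As-Ineq}, never appearing as isolated attribute comparisons. For elements, rule \rlabel{E-Eq} requires both $\equality{As}{As'}$ and $\equality{Es}{Es'}$, whereas \rlabel{E-Ineq} requires a disjunction; the induction hypotheses on $As$ and $Es$ turn ``not (both equal)'' into ``(not $As$ equal) or (not $Es$ equal),'' which by IH is ``$\inequality{As}{As'}$ or $\inequality{Es}{Es'}$,'' matching \rlabel{E-Ineq}. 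For element lists, \rlabel{Es-Eq2} needs equal lengths and pointwise equality; \rlabel{Es-Ineq} needs unequal lengths or some pointwise inequality. Here the side condition in \rlabel{Es-Ineq} quantifies $i$ over $\{1,\dots,\max(n,m)\}$, so when $n\not=m$ the existential is vacuously handled by the length clause; I would note that the $\inequality{E_i}{E_i'}$ disjunct is only meaningful when $n=m$, and then the IH on elements closes the case.

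The heart of the argument—and the main obstacle—is rule \rlabel{As-Ineq} versus \rlabel{As-Eq2} for attribute \emph{sets}, where well-formedness is essential. Given well-formed $As$, $As'$ (distinct keys within each), every key present in exactly one of the two sets forces non-derivability of $\equality{As}{As'}$ (by \rlabel{As-Eq2}, which demands a bijective pairing of equal attributes), and such keys are exactly the $K_i$ and $K_i'$ collected in the last premises of \rlabel{As-Ineq}; every key present in both with equal value is an $A_i$-pair, and every key present in both with differing value is a $B_j$-pair handled via \rlabel{A-Ineq}. I would argue that if $\equality{As}{As'}$ is not derivable, then at least one of these ``defect'' situations occurs, hence the premises of \rlabel{As-Ineq} can be instantiated; conversely, derivability of \rlabel{As-Ineq} exhibits a concrete defect (a mismatched value or an unmatched key), which by well-formedness blocks \rlabel{As-Eq2}. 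The fiddly part is checking that the two extra side conditions on \rlabel{As-Ineq}—namely $\{K_1,\dots,K_p\}\cap\{K'_1,\dots,K'_q\}=\{\}$ and $(\max(p,q)\le 1 \vee m\le 1)$—do not prevent the rule from firing whenever needed: the disjointness is automatic since these are precisely the keys occurring on only one side, and I would argue the $(\max(p,q)\le 1 \vee m\le 1)$ clause is always satisfiable by the right \emph{choice} of how to split the remaining attributes into the $A_i$, $B_j$, and leftover groups (e.g. one may always push all but one defect into the ``leftover key'' buckets, or isolate a single $B$-pair), so the rule remains complete. Making this splitting argument precise—showing the premises are \emph{satisfiable} rather than universally valid—is the one place where genuine care is required; everything else is a routine unfolding of the two rule systems under the mutual induction hypothesis.
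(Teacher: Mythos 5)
Your overall strategy---mutual structural induction over elements, element lists, attribute sets and attributes, proving the dichotomy at each level and invoking well-formedness only in the attribute-set case---is exactly the route the paper's (sketched) proof takes, and your observations about single attributes with mismatched keys and about the $\max(n,m)$ quantifier in \rlabel{Es-Ineq} are sound. The problem is the one step you yourself flag as needing ``genuine care'': your treatment of the side condition $(\max(p,q)\le 1 \vee m\le 1)$ in \rlabel{As-Ineq}. You claim it can always be satisfied ``by the right choice of how to split the remaining attributes,'' e.g.\ by pushing defects into the leftover-key buckets or isolating a single $B$-pair. That argument fails because, for well-formed sets, the split is forced. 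A key present in both $As$ and $As'$ with differing values cannot go into the leftover buckets---that would place the same key in both $\{K_1,\dots,K_p\}$ and $\{K'_1,\dots,K'_q\}$ and violate the disjointness premise---and it cannot be an $A$-pair; dually, an unmatched key cannot be an $A$- or $B$-pair, since both \rlabel{A-Eq} and \rlabel{A-Ineq} require equal keys. So $m$, $p$, $q$ are determined by $As$ and $As'$, not chosen, and with (say) two mismatched-value keys and two unmatched keys per side, your literal reading of the condition blocks the rule and completeness is lost. Read literally the condition is even satisfied when $m=p=q=0$, which would make the rule unsound.

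The resolution---which the paper's own proof text makes plain (``If we find some $j$ \dots'', ``then there is some $(K_i,V_i)$ that is not matched'')---is that the inequalities in that side condition are a typo for $\ge 1$: the condition asserts the \emph{existence} of at least one defect, an unmatched key or a mismatched value, which is exactly what makes \rlabel{As-Ineq} sound and which is always available when $\equality{As}{As'}$ fails for well-formed sets. With that correction your induction goes through essentially unchanged and coincides with the paper's argument; without it, the ``satisfiability by choice of split'' step is the one place where your proof genuinely breaks.
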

\begin{proof}
We provide a proof sketch.
  
Suppose $\inequality{G}{G'}$. We consider the various cases.
Consider rule \rlabel{Es-Ineq}. If $n \not= m$ then we can immediately
conclude that $\equality{Es}{Es'}$ is not derivable.
Otherwise,  for some $i$ we have that $\inequality{E_i}{E_i'}$
and via some inductive argument we conclude that $\equality{E_i}{E_i'}$ is not derivable.
Similar arguments apply to rule \rlabel{E-Ineq}.

Consider rule \rlabel{As-Ineq}. If we find some $j$ ($m \leq 1$) such that $\inequality{B_j}{B_j'}$
then due to rule \rlabel{A-Ineq} we can immediately conclude that $\equality{As}{As'}$
is not derivable. If there is no such $j$ then ${\mathit max}(p,q) \leq 1$.

Suppose $p \leq 1$. Then, there is some $(K_i, V_i)$ that is not matched by
any attribute in $As'$. Hence, we conclude that  $\equality{As}{As'}$ is not derivable.
The same reasoning applies for $q \leq 1$.
This concludes the proof for the direction from right to left.

Suppose $\equality{G}{G'}$ is not derivable.
Via similar reasoning as above we can show that $\inequality{G}{G'}$.
To establish rule \rlabel{As-Ineq} in case $\equality{As}{As'}$ is not derivable,
we require well-formedness of AGS.
Thus, we can assume that either failure is to some mismatched values, see rule \rlabel{A-Eq},
or we can collect the mismatched keys by the sets
$\{K_1,\dots,K_p\}$ and $\{K'_1,\dots,K'_q \}$.
\end{proof}

\subsection{Fairly Similar}

We establish some properties assuming
that the fairly similar relation is an equivalence relation.

We write $M_{G, G'}$ to denote the set of maintained elements
for $G$ and $G'$.

\begin{proposition}
  Let $G, G', G''$ be three AGSes.
  Let $(E,E') \in M_{G, G'}$ and $(E',E'') \in M_{G',G''}$.
  Then, $(E,E'') \in M_{G, G''}$.
\end{proposition}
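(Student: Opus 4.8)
The plan is to unfold the definition of the maintained set from the Element Identification definition and then appeal to transitivity of the fairly similar relation. Recall that, for AGSes $G$ and $G'$, a pair $(E,E')$ lies in $M_{G,G'}$ precisely when $E \in \extract(G)$, $E' \in \extract(G')$ and $\maintainedBy{E}{E'}$ holds. So the goal is to verify the three defining conditions of $(E,E'') \in M_{G,G''}$, namely $E \in \extract(G)$, $E'' \in \extract(G'')$ and $\maintainedBy{E}{E''}$.

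First I would read off the consequences of the two hypotheses. From $(E,E') \in M_{G,G'}$ we obtain $E \in \extract(G)$, $E' \in \extract(G')$ and $\maintainedBy{E}{E'}$; from $(E',E'') \in M_{G',G''}$ we obtain $E' \in \extract(G')$, $E'' \in \extract(G'')$ and $\maintainedBy{E'}{E''}$. In particular, the two ``extracted'' side conditions required for membership in $M_{G,G''}$, namely $E \in \extract(G)$ and $E'' \in \extract(G'')$, are already established, so only $\maintainedBy{E}{E''}$ remains.

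For the remaining obligation, I would invoke the standing assumption of this part of the appendix, that fairly similar is an equivalence relation, and hence in particular transitive. Applying transitivity to $\maintainedBy{E}{E'}$ and $\maintainedBy{E'}{E''}$ gives $\maintainedBy{E}{E''}$. Together with the two membership conditions, this yields $(E,E'') \in M_{G,G''}$, which completes the argument.

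There is really no obstacle here: once definitions are unfolded, the proof is a one-line use of transitivity. The only point that deserves a word of care is that the middle element $E'$ must be literally the same element of $\extract(G')$ in both hypotheses -- the second component of the first pair and the first component of the second -- which is exactly what the statement assumes, so transitivity applies verbatim. I would also remark that the equivalence-relation hypothesis is genuinely needed: not every instance of fairly similar is transitive (for example, thresholded similarity relations such as the Jaro--Winkler variant $\maintainedByII{}{}$ need not be), so this composition law should not be expected to hold for instances that fail transitivity.
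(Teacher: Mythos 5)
Your proof is correct and follows exactly the paper's own argument: unfold the definition of the maintained sets to obtain the membership conditions and $\maintainedBy{E}{E'}$, $\maintainedBy{E'}{E''}$, then apply transitivity of the fairly similar relation (assumed to be an equivalence relation in this part of the appendix) to conclude $\maintainedBy{E}{E''}$ and hence $(E,E'') \in M_{G,G''}$. Your closing remark that the equivalence-relation hypothesis is genuinely needed, and fails for thresholded instances, is a worthwhile observation but not part of the paper's proof.
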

\begin{proof}
  By assumption we have that
  $E \in \extract(G), E' \in \extract(G'), E'' \in \extract(G'')$
  where $\maintainedBy{E}{E'}$ and $\maintainedBy{E'}{E''}$.
  By transitivity we find that $\maintainedBy{E}{E''}$.
  Hence, $(E,E'') \in M_{G, G''}$. and we are done.
\end{proof}

The above guarantees that for expected $G$, if actual $G'$ becomes expected
for another actual $G''$, then the set of maintained elements remains stable.

\begin{proposition}
  Let $G, G', G''$ be three AGSes.
  Let $(E,E') \in M_{G, G'}$ and $(E,E'') \in M_{G,G''}$.
  Then, $(E,E'') \in M_{G, G''}$.
\end{proposition}
\begin{proof}
  By assumption we have that
  $E \in \extract(G), E' \in \extract(G'), E'' \in \extract(G'')$
  where $\maintainedBy{E}{E'}$ and $\maintainedBy{E}{E''}$.
  By symmetry and transitivity we find that $\maintainedBy{E'}{E''}$.
  Hence, $(E',E'') \in M_{G', G''}$. and we are done.
\end{proof}

\section{Implementation}

More details on our implementation.

\subsection{Mapping Web GUIs to AGS}

The mapping from a platform-specific GUI state to the AGS is a crucial part as it affects the precision of our approach.
In general, there is some information loss.
One could try to obtain as much data as possible and then map this to the AGS.
However, this would lead to too much noise and flaky tests.
For instance, certain CSS attributes are browser-specific;
so what is present in one browser may be missing in another.
This would lead to inequal attributes.
Although the user can always filter elements and attributes via $\filter$,
an adapter is supposed to provide sensible defaults that handle such platform-specifics.

To avoid this situation, we currently extract all available HTML attributes, but only a subset of CSS attributes;
namely only non-redundant attributes that influence the visual appearance of the GUI.
To further reduce the amount of data and avoid noise, we automatically filter out default attribute values.
In the case of web-based GUIs, there are, e.g., several CSS defaults, depending on the used HTML version and browser.
Only if an attribute value is not a default, it will be part of the resulting AGS.
In addition to the HTML and CSS attributes, we compute further attributes as mentioned in Section~\ref{subsec:syntax}.
This includes the path of an element, which corresponds to its absolute XPath.
We also include an element's $x$ and $y$ position as well as its height and width, relative to the current viewport.

\begin{figure}[h]
\begin{algorithmic}[1]
	\Function{ComputeAttributes}{$H$, $C$}
		\State $M \gets \emptyset$ \Comment{initialize resulting mapping}
		\State $D \gets$ current DOM from browser
		\For{each node $n \in D$}
			\State $p_n \gets$ absolute XPath for $n$
			\State $h_n \gets$ non-default HTML attributes $H$ for $n$
			\State $c_n \gets$ non-default CSS attributes $C$ for $n$
			\State $a_n \gets h_n \cup c_n$ \Comment{merge attribute sets}
			\State $M \gets M \cup (p_n, a_n)$ \Comment{put key-value pair}
		\EndFor
		\State \Return $M$
	\EndFunction
\end{algorithmic}
\caption{Pseudocode to compute attributes.}
\label{algo:js}
\end{figure}

\begin{figure}[h]
\begin{algorithmic}[1]
	\Function{ConstructAgs}{$M$}
		\State $G \gets \emptyset$ \Comment{initialize resulting AGS}
		\State $M \gets sort(M)$ \Comment{sort from root to leaves}
		\For{each key-value pair $m \in M$}
			\State $p_m \gets key(m)$ \Comment{get absolute XPath}
			\State $a_m \gets value(m)$ \Comment{get merged attribute sets}
			\State $Es_m \gets \emptyset$ \Comment{initialize child elements}
			\State $As_m \gets p_m \cup a_m$ \Comment{construct attributes}
			\State $E_m \gets (Es_m, As_m)$ \Comment{construct element}
			\State $P_m \gets$ parent element for $E_m$ via $p_m$ in $G$
			\If{$P_m \neq \epsilon$}
				\State $Ps_m \gets$ child elements for $P_m$
				\State $Ps_m \gets Ps_m \cup E_m$ \Comment{add $E_m$ to children}
			\EndIf
			\State $G \gets G \cup E_m$ \Comment{add $E_m$ to AGS}
		\EndFor
		\State \Return $G$
	\EndFunction
\end{algorithmic}
\caption{Pseudocode to construct the AGS.}
\label{algo:peer-converter}
\end{figure}

To trigger the adapter, the user creates a \emph{visual checkpoint} (see below) just like with Applitools.
A checkpoint is similar to an assertion in test code, but there is no need to specify expected and actual.
Instead, we automatically extract the current GUI state via the WebDriver API and then compare it to the persisted golden master.
Extracting the state data is done by executing JavaScript code inside the browser.
This code traverses the DOM and computes the aforementioned attributes for all nodes as shown in Figure~\ref{algo:js}.

The result is a mapping $M$ for each node $n$ from the absolute XPath $p_n$ to the node's attributes $a_n$.
The absolute XPath serves two purposes.
First, it acts as an ID since it is unique within a GUI state.
Second, until we have fully created the AGS, it is the only structure-preserving information we have.
Next, we use the resulting map $M$ to construct the AGS with the recheck API, which provides the basic AGS types ($G$, $E$, $A$, etc.).
This construction is described in Figure~\ref{algo:peer-converter}.
We first sort the mapping from the root to the leaves.
For each key-value pair $m$ in $M$, we construct an element $E_m$ including an empty list of child elements $Es_m$ and the computed set of attributes $As_m$.
We then look for the parent element $P_m$ and add $E_m$ to its children $Ps_m$.
Finally, we add $E_m$ to the GUI state $G$.

\subsection{Checkpointing}

In general, a test case $C$ on the GUI level can be seen as a sequence of actions $a$ that lead to certain states $s$:
\begin{equation*}
C := \langle s_0 \xrightarrow{a_1} s_1 \xrightarrow{a_2} \ldots \xrightarrow{a_n} s_n \rangle
\end{equation*}
In our case, the user executes the actions via the Selenium WebDriver API to stimulate the SUT.
Whenever the SUT has reached a GUI state the user wants to check, the recheck API can be used to create a \emph{visual checkpoint}.
We denote this by underlining the corresponding state $\underline{s_i}$.
A checkpoint triggers the adapter to extract the current GUI state and map it to the AGS.
When a checkpoint is reached for the first time, we use the resulting AGS as the golden master (expected).
Subsequent test runs then create to AGSes (actual) that are compared to this golden master.
That is, every time a test is executed, we compare the expected GUI state against the actual GUI state to detect changes.
Similar to Applitools, each checkpoint requires a so-called \enquote{step name}, which is used as an ID for the golden master.

Let's again pick up the demo login screen from Figure~\ref{fig:login-all}.
We want to perform two simple visual checks: one before and one after the login.
A corresponding test case could look like this:
\begin{equation*}
C := \langle \underline{s_0} \xrightarrow{a_1} s_1 \xrightarrow{a_2} s_2 \xrightarrow{a_3} \underline{s_3} \rangle
\end{equation*}
Where:
\begin{itemize}
	\item $s_0$: State with empty login screen.
	\item $a_1$: Insert username into corresponding text field.
	\item $s_1$: State with inserted username.
	\item $a_2$: Insert password into corresponding text field.
	\item $s_2$: State with inserted username and password.
	\item $a_3$: Click login button.
	\item $s_3$: State after login.
\end{itemize}
We assign the step name \str{before-login} to $s_0$ and \str{after-login} to $s_3$, and create a set $S$ of golden masters or GUI states),
respectively, which is associated with this particular test case:
\begin{equation*}
S := \{ G_{s_0}, G_{s_3} \}
\end{equation*}
Where:
\begin{align*}
G_{s_0} & \Leftrightarrow \str{before-login} \\
G_{s_3} & \Leftrightarrow \str{after-login}
\end{align*}

When the test case is run again, we try to assign the expected and actual version of the GUI state by using the provided step name.
Two situations can occur:
\begin{enumerate}
	\item We cannot find a golden master for the given step name,
			so we create one and then we fail the test case.
	\item The step name corresponds to a golden master,
			so we can proceed with the comparison of expected and actual.
\end{enumerate}

In any case, the user is responsible to ensure that the SUT has finished loading.
Only then we know that both expected and actual GUI state are in a steady state and, therefore, comparable.

\end{document}